\newcommand{\cP}{\mathcal{P}}
\newcommand{\cN}{\mathcal{N}}
\newcommand{\bD}{{\textbf{D}}}
\newcommand{\nn}{\nonumber}
\newcommand{\R}{\mathbb{R}}
\newcommand{\kb}[1]{|#1\rangle\!\langle#1|} 
\newtheorem{theorem}{Theorem}
\newtheorem{lemma}[theorem]{Lemma}
\newtheorem{proposition}[theorem]{Proposition}
\theoremstyle{definition}
\newtheorem{remark}[theorem]{Remark}
\newtheorem*{definition*}{Definition}
\let\x@caption\caption
\newcommand{\x@@caption}[2][\empty]{%
	\ifx\empty#1\relax\x@caption{#2}%
	\else\x@caption[#1]{\textbf{#1.} #2}%
	\fi}
\let\caption\x@@caption
\renewcommand{\captionabove}[2][\empty]{\captionsetup{position=above}%
	\x@@caption[#1]{#2}%
}
\renewcommand{\captionbelow}[2][\empty]{\captionsetup{position=below}%
	\x@@caption[#1]{#2}%
}
\global\boolfalse{cbx:parens}}
\renewcommand{\tr}{\Tr}
\newcommand{\br}[1]{\left(#1\right)}
\NewDocumentCommand{\stared}{m m}{\IfBooleanTF{#2}{#1*}{#1}}
\NewDocumentCommand{\es}{m o}{
	\IfNoValueTF{#2}{
		\mathbb{\uppercase{#1}}^{\lowercase{#1\/}}
	}{
		\mathbb{\uppercase{#1}}^{#2\/}
	}
}
\newcommand{\reals}{\mathbb{R}}
\newcommand{\HS}{\mathcal{H}}
\NewDocumentCommand{\BO}{o}{\mathcal{B}\br{\IfValueTF{#1}{#1}{\HS}}}
\NewDocumentCommand{\DM}{o}{\mathcal{D}\br{\IfValueTF{#1}{#1}{\HS}}}
\NewDocumentCommand{\pos}{o}{\mathscr{P}(\IfValueTF{#1}{#1}{\HS})}
\NewDocumentCommand{\posdef}{o}{\mathscr{P}_{\mkern -4mu + \mkern -1mu}(\IfValueTF{#1}{#1}{\HS})}
\NewDocumentCommand{\supp}{m}{\textrm{supp}\br{#1}}
\DeclarePairedDelimiterX{\infdivx}[2]{(}{)}{
	#1\delimsize|\delimsize|#2
}
\DeclareDocumentCommand{\qrd}{o o m m}{\ensuremath{\IfValueTF{#1}{#1}{D}\IfValueTF{#2}{_{#2}}{}\/\infdivx*{#3}{#4}}}
\newcommand{\grd}{\qrd[\widehat{D}][α]}
\newcommand{\renyi}{Rényi\xspace}
\NewDocumentCommand{\ent}{d()g}{\ensuremath{H\br{\IfValueTF{#1}{#1}{#2}}}}
\NewDocumentCommand{\qent}{d()g}{\ensuremath{H\br{\IfValueTF{#1}{#1}{#2}}}}
\newcommand*\ric[1]{\vphantom{#1}\smash{#1_{}\kern-\scriptspace}}
\DeclarePairedDelimiter{\mnorm}{\lVert}{\rVert}
\title{The $α \to 1$ Limit of the Sharp Quantum \renyi Divergence}
\author{Bjarne Bergh}
\author{Robert Salzmann}
\author{Nilanjana Datta}
\affil{DAMTP, University of Cambridge, United Kingdom}
\begin{document}
	\maketitle
	\begin{abstract}
		Fawzi and Fawzi \cite{fawzi_defining_2020} recently defined the sharp \renyi divergence, $D_\alpha^\#$, for $\alpha \in (1, \infty)$, as an additional quantum \renyi divergence with nice mathematical properties and applications in quantum channel discrimination and quantum communication. One of their open questions was the limit $α \to 1$ of this divergence. By finding a new expression of the sharp divergence in terms of a minimization of the geometric \renyi divergence, we show that this limit is equal to the Belavkin-Staszewski relative entropy. Analogous minimizations of arbitrary generalized divergences lead to a new family of generalized divergences that we call {\em{kringel divergences}}, and for which we prove various properties including the data-processing inequality. 
	\end{abstract}
	\section{Geometric and sharp R\'enyi divergences} 
	Let $\HS$ be a complex finite-dimensional Hilbert space, and $\BO$ the set of linear operators on $\HS$. We write $\pos$ for the set of positive semi-definite operators on $\HS$ and $\posdef$ for the set of positive definite operators. Let $\DM$ denote the set of density matrices, i.e.~the set of positive semi-definite operators with trace 1. For $A, B \in \pos$ we further write $A \ll B$ if $\supp{A} \subseteq \supp{B}$.

	For $A, B \in \posdef$ and $α \in \reals$ the \emph{weighted matrix geometric mean} is defined as
	
	\begin{equation}\label{def:matrix-geom-mean}
	    A \#_α B \coloneqq A^{\frac{1}{2}}\left(A^{-\frac{1}{2}}B A^{-\frac{1}{2}}\right)^α A^{\frac{1}{2}} \,.
	\end{equation}
	For $α \ge 0$ this definition can be easily extended to more general $A, B \in \pos$, $B \ll A$, by restricting the Hilbert space to the support of $A$ (this also corresponds to directly using pseudo-inverses in (\ref{def:matrix-geom-mean})). For $α \in [0,1]$ the weighted matrix geometric mean satisfies many desirable properties of a matrix mean \cite{kubo_means_1979}. 
	
	The geometric \renyi divergence (also called the maximal \renyi divergence), first introduced by Matsumoto~\cite{matsumoto_new_2013}, can be defined in terms of the weighted matrix geometric mean. For $ρ, σ \in \pos$, $ρ \ll σ$ and $α \in (0, \infty)$ define the \emph{geometric trace function} as 
	\begin{equation}
	    \label{def:geometric-trace-fn}
		\qrd[\widehat{Q}][α]{ρ}{σ} \coloneqq \tr (\sigma \#_{\alpha} \rho) = \tr(σ^{\frac{1}{2}}(σ^{-\frac{1}{2}}ρ σ^{-\frac{1}{2}})^α σ^{\frac{1}{2}})\,.
	\end{equation}
	This can be extended to general $ρ, σ \in \pos$ by setting
	\begin{equation}
	   \qrd[\widehat{Q}][α]{ρ}{σ} \coloneqq \lim_{ε \to 0} \qrd[\widehat{Q}][α]{ρ}{σ + ε \IdentityMatrix} \, .
	\end{equation}
	For $α > 1$ this limit is equal to $+ \infty$ if $ρ \centernot\ll σ$, whereas for $α \in (0, 1]$ the limit is always finite and explicit expressions for it can be found in~\cite{matsumoto_new_2013, hiai_different_2017, katariya_geometric_2020}.
	For $α \in (0, 1) \cup (1, \infty)$ one then defines the \emph{geometric \renyi divergence} as \cite{matsumoto_new_2013, petz_contraction_1998,  fang_geometric_2019, katariya_geometric_2020}
	\begin{equation}\label{def:geometric-div}
		\qrd[\widehat{D}][α]{ρ}{σ} \coloneqq \frac{1}{α - 1} \log \qrd[\widehat{Q}][α]{ρ}{σ}\, .
	\end{equation}
	
	It reduces to the classical $α$-\renyi divergence for commuting states and satisfies the data-processing inequality for $α \in (0, 1) \cup (1, 2]$~\cite{matsumoto_new_2013}. Further, it is known~\cite{matsumoto_new_2013, tomamichel_quantum_2016, katariya_geometric_2020} that for $ρ \in \DM$ and $σ \in \pos$
		\begin{equation}
		  \lim\limits_{α \to 1} 	\qrd[\widehat{D}][α]{ρ}{σ}  = \widehat{D}(\rho||\sigma) \coloneqq  \tr(ρ \log(ρ^{\frac{1}{2}}σ^{-1}ρ^{\frac{1}{2}})),
		\end{equation}
	the Belavkin-Staszewski relative entropy \cite{belavkin_c-algebraic_1982}. 
	
	In~\cite{fawzi_defining_2020}, Fawzi and Fazwi defined the sharp trace function for $α \in (1, \infty)$:
	\begin{equation}\label{def:sharp-trace-fn}
		\qrd[Q^\#][α]{ρ}{σ} = \min_{A \geq 0}\Set{ \tr A | ρ \leq σ\#_{\frac{1}{α}} A }.
	\end{equation}
    They defined the sharp R\'enyi divergence of order $\alpha$ in terms of it as follows:
	\begin{equation}\label{def:sharp-div}
		\qrd[D^\#][α]{ρ}{σ} \coloneqq \frac{1}{α - 1} \log(\qrd[Q^\#][α]{ρ}{σ}),
	\end{equation}
	and proved that it satisfies the data-processing inequality, and that it also reduces to the classical $α$-\renyi divergence for commuting states. Further, they showed that this divergence has several desirable computational
properties such as an efficient semidefinite programming representation for states and quantum channels (i.e.~linear, completely positive trace-preserving maps), and a crucial chain rule property
which can be exploited to obtain important information-theoretic results concerning quantum channel discrimination and quantum channel capacities.\smallskip

A natural question to ask is: {\em{What is $\lim_{\alpha \to 1} \qrd[D^\#][α]{ρ}{σ}$?}} 
\smallskip

This was left as an open question in~\cite{fawzi_defining_2020}.
In this paper we answer this question by proving that this limit is given by the Belavkin-Staszewski relative entropy (see Theorem~\ref{α1:theorem-sharp-bs-limit} below).

\smallskip
To address the limit of $α \to 1$ we want to make use of the following alternative characterization of the sharp trace function and sharp R\'enyi divergence.
    \begin{proposition} \label{prop:key}  
		For $α \in (1,\infty)$ and $ρ, σ \in \pos$
		\begin{align}\label{sharp-min-geometric}
		 \qrd[Q^\#][α]{ρ}{σ} &= \min_{A \geq ρ} \qrd[\widehat{Q}][α]{A}{σ}, \\
		 \qrd[D^\#][α]{ρ}{σ} &= \min_{A \geq ρ} \qrd[\widehat{D}][α]{A}{σ}.
		 \end{align}
	\end{proposition}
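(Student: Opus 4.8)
The plan is to obtain both identities at once from a change of variables, exploiting the fact that for positive definite $σ$ the maps $X \mapsto σ\#_{α}X$ and $X \mapsto σ\#_{1/α}X$ are mutually inverse bijections of $\pos$. First, note that the second identity reduces to the first: since $α>1$, the function $t\mapsto\frac{1}{α-1}\log t$ is increasing on $(0,\infty)$, so by \eqref{def:geometric-div} and \eqref{def:sharp-div} it commutes with the minimisation, and hence $\min_{A\geq ρ}\qrd[\widehat{D}][α]{A}{σ}=\frac{1}{α-1}\log\min_{A\geq ρ}\qrd[\widehat{Q}][α]{A}{σ}=\qrd[D^\#][α]{ρ}{σ}$ as soon as $\qrd[Q^\#][α]{ρ}{σ}=\min_{A\geq ρ}\qrd[\widehat{Q}][α]{A}{σ}$ is established. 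Next I would reduce to the case $σ\in\posdef$. If $ρ\centernot\ll σ$ then both sides are $+\infty$: on the right every $A\geq ρ$ has $\supp{ρ}\subseteq\supp{A}$, hence $A\centernot\ll σ$ and $\qrd[\widehat{Q}][α]{A}{σ}=+\infty$; on the left the constraint $ρ\leq σ\#_{1/α}A$ is infeasible, since $σ\#_{1/α}A$ is always supported within $\supp{σ}$. If instead $ρ\ll σ$, let $\Pi$ be the projection onto $\supp{σ}$; on the right one may restrict to $A$ with $A=\Pi A\Pi$ (else the objective is $+\infty$), and on the left $σ\#_{1/α}A$ depends only on $\Pi A\Pi$ (use pseudo-inverses in \eqref{def:matrix-geom-mean}), so replacing $A$ by $\Pi A\Pi$ preserves feasibility and does not increase $\tr A$. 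Both problems therefore take place on $\supp{σ}$, where $σ$ is positive definite.

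So assume $σ\in\posdef$ and set $\Psi(X)\coloneqq σ\#_{1/α}X=σ^{1/2}(σ^{-1/2}Xσ^{-1/2})^{1/α}σ^{1/2}$ and $\Phi(X)\coloneqq σ\#_{α}X=σ^{1/2}(σ^{-1/2}Xσ^{-1/2})^{α}σ^{1/2}$ for $X\in\pos$. Using $(x^{1/α})^{α}=x=(x^{α})^{1/α}$ for $x\geq 0$ one checks directly that $\Phi\circ\Psi=\Psi\circ\Phi=\mathrm{id}$ on $\pos$, so $\Psi$ and $\Phi$ are mutually inverse homeomorphisms of $\pos$, and $\tr(\Phi(C))=\qrd[\widehat{Q}][α]{C}{σ}$ by \eqref{def:geometric-trace-fn}. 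Now in \eqref{def:sharp-trace-fn} substitute $A=\Phi(C)$, i.e.\ $C=\Psi(A)$: the objective becomes $\tr A=\tr(\Phi(C))=\qrd[\widehat{Q}][α]{C}{σ}$; the constraint $ρ\leq σ\#_{1/α}A$ becomes exactly $ρ\leq C$; and the constraint $A\geq 0$ becomes redundant, since $C\geq ρ\geq 0$ forces $\Phi(C)\geq 0$. As $A$ ranges over the feasible set, $C=\Psi(A)$ ranges bijectively over $\{C:C\geq ρ\}$, whence $\qrd[Q^\#][α]{ρ}{σ}=\min_{C\geq ρ}\qrd[\widehat{Q}][α]{C}{σ}$, which is the first identity. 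Note that no operator monotonicity of $\Psi$ is needed, precisely because the order constraint ends up being imposed directly on $C=\Psi(A)$.

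Finally I would check that the minima are genuinely attained, so that writing ``$\min$'' rather than ``$\inf$'' is legitimate: on the right $C\mapsto\qrd[\widehat{Q}][α]{C}{σ}=\tr(σ^{1/2}(σ^{-1/2}Cσ^{-1/2})^{α}σ^{1/2})$ is continuous and, since $α>1$, grows without bound as $\norm{C}\to\infty$, so it attains its infimum over the nonempty closed set $\{C:C\geq ρ\}$; the minimum on the left is then attained too via $\Phi$. I do not anticipate a real obstacle here: the essential point is the elementary identity that $σ\#_{1/α}$ inverts $σ\#_{α}$, and the only places that require genuine care are the support/pseudo-inverse bookkeeping in the reduction to $σ\in\posdef$ (equivalently, verifying that $\Psi$ and $\Phi$ remain mutually inverse once $σ$ is allowed to be rank-deficient) and the attainment of the minima.
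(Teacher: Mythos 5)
Your proof is correct and rests on the same core idea as the paper's: the substitution $A = \sigma\#_\alpha C$ (which the paper performs as the successive redefinitions $A \mapsto \sigma^{1/2}A\sigma^{1/2} \mapsto \sigma^{1/2}A^\alpha\sigma^{1/2}$) turns the constraint $\rho \le \sigma\#_{1/\alpha}A$ into $\rho \le C$ and the objective $\tr A$ into $\widehat{Q}_\alpha(C\|\sigma)$, and your support/pseudo-inverse reduction to $\sigma\in\posdef$ is sound. The only substantive difference is how well-posedness is handled: you get attainment of the minimum from coercivity of $C \mapsto \widehat{Q}_\alpha(C\|\sigma)$ for $\alpha>1$, whereas the paper imports Fawzi and Fawzi's a priori bound $A \le c^{\alpha-1}\tr(\rho)\Pi_\sigma$ on the optimizer, which has the side benefit of yielding the compact, $\alpha$-uniform restriction of the feasible set (Lemma~\ref{α1:compact-restriction}) that the paper needs later for the $\alpha\to1$ limit but that the proposition itself does not require.
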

	\begin{remark}
	Note that the $A$ in the above expressions are in general unnormalized states and we use the definitions of the geometric trace function and the geometric R\'enyi divergence as in (\ref{def:geometric-trace-fn}) and (\ref{def:geometric-div}) without additional normalization factors.
	\end{remark}
	For what is going to follow, we prove a slightly stronger version of the above statement, given by the following lemma.
	
	\begin{lemma}\label{α1:compact-restriction}
		For $α \in (1, a]$, with $a \in (1, \infty)$, and $ρ, σ \in \pos$,
		\begin{equation}
			\qrd[Q^\#][α]{ρ}{σ} = \min_{A \geq ρ} \qrd[\widehat{Q}][α]{A}{σ},
		\end{equation}
		and the minimization can be restricted to $A \in K$ where $K$ is a compact subset of $\Set{A \in \pos| A \ll σ}$ depending only on $ρ, σ, a$, but not on $α$.
	\end{lemma}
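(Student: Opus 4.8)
The plan is to convert the constrained minimization \eqref{def:sharp-trace-fn} defining $\qrd[Q^\#][\alpha]{\rho}{\sigma}$ into a free minimization of a geometric trace function by an explicit change of variables, and then to cut the resulting (unbounded) feasible region down to a compact set by means of a coercivity estimate that is uniform in $\alpha$. First I would dispose of the degenerate case: if $\rho \centernot\ll \sigma$ then no $A \geq 0$ satisfies $\rho \leq \sigma \#_{1/\alpha} A$ (since $\sigma \#_{1/\alpha} A \ll \sigma$ always), so $\qrd[Q^\#][\alpha]{\rho}{\sigma} = +\infty$; and every $A \geq \rho$ has $A \centernot\ll \sigma$, so $\qrd[\widehat{Q}][\alpha]{A}{\sigma} = +\infty$ and both sides agree. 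Hence assume $\rho \ll \sigma$, and note that in \eqref{def:sharp-trace-fn} replacing a feasible $A$ by its compression to $\supp{\sigma}$ leaves $\sigma \#_{1/\alpha} A$ unchanged without increasing $\tr A$, while on the right any $A$ with $A \centernot\ll \sigma$ gives $+\infty$; so throughout we may restrict to $\supp{\sigma}$ and assume $\sigma \in \posdef$.

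For the change of variables, observe that for $\sigma \in \posdef$ the map $A \mapsto B \coloneqq \sigma \#_{1/\alpha} A = \sigma^{1/2}(\sigma^{-1/2} A \sigma^{-1/2})^{1/\alpha} \sigma^{1/2}$ is a bijection of $\pos$ onto itself with inverse $B \mapsto \sigma \#_{\alpha} B$, since $t \mapsto t^{1/\alpha}$ is a bijection of $[0,\infty)$ and $\sigma \#_{\alpha}(\sigma \#_{1/\alpha} A) = A$. Under this substitution the objective transforms as $\tr A = \tr(\sigma \#_{\alpha} B) = \qrd[\widehat{Q}][\alpha]{B}{\sigma}$, while the constraint becomes simply $\rho \leq \sigma \#_{1/\alpha} A = B$. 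Thus \eqref{def:sharp-trace-fn} is, verbatim, the minimization $\min_{B \geq \rho} \qrd[\widehat{Q}][\alpha]{B}{\sigma}$ written in the variable $B$ (the values $B \centernot\ll \sigma$ contribute $+\infty$ and are harmless once $\rho \ll \sigma$), which gives the claimed identity at the level of optimal values; that the minimum is attained will drop out of the compactness step.

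For the latter, I would first record a coercivity bound. Writing $X \coloneqq \sigma^{-1/2} B \sigma^{-1/2} \geq 0$ and letting $\lambda_{\min} > 0$ denote the smallest eigenvalue of $\sigma$ on $\supp{\sigma}$, one has
\[
\qrd[\widehat{Q}][\alpha]{B}{\sigma} = \tr(\sigma X^{\alpha}) \geq \lambda_{\min}\tr(X^{\alpha}) \geq \lambda_{\min}\norm{X}^{\alpha} \geq \lambda_{\min}\left(\frac{\norm{B}}{\norm{\sigma}}\right)^{\alpha},
\]
and since $\alpha \geq 1$ the last quantity is $\geq \lambda_{\min}\norm{B}/\norm{\sigma}$ whenever $\norm{B} \geq \norm{\sigma}$, hence tends to $\infty$ as $\norm{B} \to \infty$ \emph{uniformly} in $\alpha \in (1, a]$. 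On the other hand, $B_0 \coloneqq \norm{\rho}\, P \geq \rho$ (with $P$ the orthogonal projection onto $\supp{\sigma}$) is feasible and $\qrd[\widehat{Q}][\alpha]{B_0}{\sigma} = \norm{\rho}^{\alpha}\tr(\sigma^{1-\alpha})$ depends continuously on $\alpha$ over the compact interval $[1, a]$, hence is bounded by some $M = M(\rho, \sigma, a) < \infty$. Combining the two estimates, any $B \geq \rho$ with $\qrd[\widehat{Q}][\alpha]{B}{\sigma} \leq M$ satisfies $\norm{B} \leq R$ for an explicit $R = R(\rho, \sigma, a)$, so the minimization may be restricted to $K \coloneqq \Set{A \in \pos | A \geq \rho,\ A \ll \sigma,\ \norm{A} \leq R}$, a compact subset of $\Set{A \in \pos | A \ll \sigma}$ depending only on $\rho, \sigma, a$. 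Since $A \mapsto \qrd[\widehat{Q}][\alpha]{A}{\sigma}$ is continuous on $\Set{A \in \pos | A \ll \sigma}$ and $K$ is compact and nonempty (it contains $B_0$), the infimum over $K$, which by the above estimates equals the infimum over all $A \geq \rho$, is attained; together with the identity of the previous paragraph this completes the proof.

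The main obstacle is the compactness/attainment step rather than the identity: because $A \mapsto \qrd[\widehat{Q}][\alpha]{A}{\sigma}$ fails to be operator monotone in $A$ for $\alpha > 1$, one cannot simply argue that enlarging $A$ only enlarges the value, so the quantitative coercivity estimate is essential; and some care is needed to keep every constant --- in particular the feasible point $B_0$ and the bound $M$ --- uniform in $\alpha$, which is exactly where the hypothesis $a < \infty$ (rather than $\alpha$ ranging over all of $(1, \infty)$) is used.
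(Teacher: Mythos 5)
Your proof is correct. The identity $\qrd[Q^\#][α]{ρ}{σ} = \min_{A \geq ρ}\qrd[\widehat{Q}][α]{A}{σ}$ is obtained by essentially the same change of variables as in the paper: your single substitution $A \mapsto B = σ\#_{1/α}A$ is exactly the composite of the paper's successive reparametrizations ($A \mapsto σ^{1/2}Aσ^{1/2}$, then $A \mapsto A^{1/α}$, then back), just packaged as one bijection of $\pos[\supp{σ}]$ with inverse $B \mapsto σ\#_αB$ — arguably cleaner, since it avoids the paper's intermediate step of arguing that $A^α \le c\,\Pi_σ$ and $A \le c^{1/α}\Pi_σ$ are equivalent when $\Pi_σ$ dominates the support of $A$. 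Where you genuinely diverge is the compactness step: the paper imports the a priori bound $0 \le A \le c^{α-1}\tr(ρ)\,\Pi_σ$ from Fawzi and Fawzi and transports it through the substitutions (taking suprema over $α \in [1,a]$ at each stage to make the constants uniform), whereas you prove boundedness of the sublevel set from scratch via the coercivity estimate $\qrd[\widehat{Q}][α]{B}{σ} \ge \lambda_{\min}(σ)\,\mnorm{B}/\mnorm{σ}$ for $\mnorm{B} \ge \mnorm{σ}$ (uniform in $α \in (1,a]$ since $t^α \ge t$ for $t \ge 1$), paired with the uniformly bounded feasible point $B_0 = \mnorm{ρ}\,\Pi_σ$. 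Your route is self-contained and makes transparent exactly where the hypothesis $a < \infty$ enters; the paper's route is shorter given that the Fawzi--Fawzi restriction is already available as a citation. Both yield a compact $K \subseteq \Set{A \in \pos | A \ll σ}$ depending only on $ρ, σ, a$, and your additional observation that continuity of $A \mapsto \qrd[\widehat{Q}][α]{A}{σ}$ on this set gives attainment of the minimum is consistent with the paper's Lemma~\ref{α1:posdef-continuity}.
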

	\begin{proof}
		Recall the definition of $\qrd[Q^\#][α]{ρ}{σ}$ for $\alpha >1$~\cite{fawzi_defining_2020}:
		\begin{equation}
			\qrd[Q^\#][α]{ρ}{σ} = \min_{A \geq 0}\Set{\tr A | ρ \leq σ^{\frac{1}{2}} (σ^{-\frac{1}{2}} A σ^{-\frac{1}{2}})^{\frac{1}{α}} σ^{\frac{1}{2}}}\,.
		\end{equation}
	    Fawzi and Fawzi~\cite{fawzi_defining_2020} showed that the minimization can be further restricted to $0 \leq A \leq c^{α - 1} \tr(ρ) \Pi_σ$, where $c = \mnorm{σ^{-\frac{1}{2}} ρ σ^{-\frac{1}{2}}}$ is the spectral norm, and $\Pi_σ$ is the projection onto the support of $σ$. It is easy to see that if $ρ \centernot\ll σ$ the minimization is infeasible, and so $\qrd[Q^\#][α]{ρ}{σ} = +\infty$. Moreover, if $ρ \centernot\ll σ$ and $A \ge ρ$ then also $A \centernot\ll σ$, hence $\min_{A \ge ρ}\qrd[\widehat{Q}][α]{A}{σ} = +\infty$ and the statement holds. Thus, we can assume $ρ \ll σ$.
		Taking $C \coloneqq \tr (ρ)\sup{_{α \in [1, a]}} c^{α - 1}$, which is always finite, we get:
		\begin{align}
			\qrd[Q^\#][α]{ρ}{σ} &= \min_{A \geq 0}\Set{\tr A | ρ \leq σ^{\frac{1}{2}} (σ^{-\frac{1}{2}} A σ^{-\frac{1}{2}})^{\frac{1}{α}} σ^{\frac{1}{2}}, \; A \leq C \Pi_σ} \\
			&=\min_{A \geq 0}\Set{\tr(σ^{\frac{1}{2}} A σ^{\frac{1}{2}}) | ρ \leq σ^{\frac{1}{2}} A^{\frac{1}{α}} σ^{\frac{1}{2}}, \; A \leq C σ^{-\frac{1}{2}}\Pi_σ σ^{-\frac{1}{2}} \leq C \mnorm{σ^{-1}} \Pi_σ }\\
			&= \min_{A \geq 0} \Set{\tr(σ^{\frac{1}{2}} A^α σ^{\frac{1}{2}}) | ρ \leq σ^{\frac{1}{2}} A σ^{\frac{1}{2}}, \; A^α \leq C \mnorm{σ^{-1}} \Pi_σ }
		\end{align}
		where we redefined $A$ multiple times, and understand $\mnorm{σ^{-1}}$ as the operator norm of the pseudo-inverse.
		While for general matrices $A, B \geq 0$, the statements $A^α \leq B$ and $A \leq B^\frac{1}{α}$ are not equivalent, they are equivalent in the case in which $B$ is a constant times a projector onto a subspace that includes the support of $A$. 
		This is because in this case $B$ is diagonal in the same basis in which $A$ (and hence also $A^\alpha$) is diagonal. 
		Hence, the operator inequality $A^α \leq B$ turns into a condition only on the eigenvalues of $A$, which is then equivalent to $A \leq B^\frac{1}{α}$. Thus, we finally get:
		\begin{align}
			\qrd[Q^\#][α]{ρ}{σ} &= \min_{A \geq 0} \Set{\tr(σ^{\frac{1}{2}} A^α σ^{\frac{1}{2}}) | ρ \leq σ^{\frac{1}{2}} A σ^{\frac{1}{2}}, \; A \leq (C \mnorm{σ^{-1}})^\frac{1}{α} \Pi_σ } \\
			&= \min_{A \geq 0}   \Set{\qrd[\widehat{Q}][α]{A}{σ} | ρ \leq A , \; A \leq  (C \mnorm{σ^{-1}})^\frac{1}{α} σ^{\frac{1}{2}} \Pi_σ σ^{\frac{1}{2}} \leq (C \mnorm{σ^{-1}})^\frac{1}{α}\mnorm{σ}\Pi_σ } 
		\end{align}
		With $\tilde{C} := \sup_{α \in [1, a]} (C \mnorm{σ^{-1}})^\frac{1}{α}\mnorm{σ} $, which exists, and $K := \set{A\in \pos | ρ \leq A \leq \tilde{C} \Pi_σ}$, which is compact, we have for all $α \in (1,b]$:
		\begin{equation}
			\qrd[Q^\#][α]{ρ}{σ} = \min_{A \in K} \qrd[\widehat{Q}][α]{A}{σ}\, .
		\end{equation}
	\end{proof}	
	
	Since the logarithm on $\reals$ is monotone, Lemma~\ref{α1:compact-restriction} implies that also $\qrd[D^\#][α]{ρ}{σ} = \min_{A \geq ρ} \qrd[\widehat{D}][α]{A}{σ}$ for all $α > 1$. This completes the proof of Proposition~\ref{prop:key}. Note, that for $α > 2$ the geometric divergence does not satisfy the data-processing inequality, while the sharp divergence does.

	\section{The $α \to 1$ limit}
	We show in Theorem~\ref{α1:theorem-sharp-bs-limit} at the end of this section that the limit $α \to 1$ of the sharp divergence is the Belavkin-Staszewski relative entropy. 
	The key step in the proof is to show that $\dv{α} \qrd[Q^\#][α]{ρ}{σ} |_{α = 1}$ exists and is equal to $\dv{α} \qrd[\widehat{Q}][α]{ρ}{σ} |_{α = 1}$.
	One way to establish this is to use the following theorem~\autocite[Theorem~2.2.1]{demyanov_theory_1971}:
	\begin{theorem}[Dem'yanov and Malozemov (1971)]\label{α1:demyanov-theorem}
		Let $U \subset \reals^n$ be open, $K \subset \reals^m$ compact, and $f \colon U\times K \to \reals$ continuous and also that $\nabla_u f(u,k)$ is (jointly) continuous. Then, the function $g(u) = \min_{k \in K} f(u,k)$ has for every $u \in U$ a one-sided directional derivative along every $v \in \reals^n$, which can be computed as
		\begin{equation}
			\lim\limits_{t \searrow 0} \frac{g(u + v t) - g(u)}{t} = \min_{k \in R(u)} \langle \nabla_u f(u, k), v \rangle,
		\end{equation}
	where $R(u) = \set{k \in K | f(u,k) = g(u)}$.\footnote{In \cite{demyanov_theory_1971} the theorem is phrased with a maximum instead of a minimum, but it is easy to see that this is equivalent upon setting $f \mapsto -f$.}
	\end{theorem}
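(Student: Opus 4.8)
The plan is the classical sandwich argument underlying Danskin's theorem: bound the $\limsup$ of the difference quotient from above by comparing $g(u+tv)$ with $f(u+tv,\cdot)$ evaluated at a fixed active point, bound the $\liminf$ from below by following the (moving) minimizers of the perturbed problems, and check that the two bounds coincide. As preliminaries I would record three routine facts: for each fixed $k$, joint continuity of $\nabla_u f$ makes $u \mapsto f(u,k)$ of class $C^1$ on $U$; the function $g$ is continuous on $U$ (it is a pointwise infimum of continuous functions, hence upper semicontinuous, while lower semicontinuity follows from compactness of $K$ and continuity of $f$); and $R(u)$, being a closed subset of the compact set $K$, is compact, so $k \mapsto \langle \nabla_u f(u,k), v\rangle$ attains a minimum on it and the right-hand side is well-defined. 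Throughout, fix $u \in U$ and $v \in \reals^n$; since $U$ is open, $u + tv \in U$ for all sufficiently small $t > 0$.

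For the upper bound, take any $k^\star \in R(u)$. Since $g(u+tv) \leq f(u+tv,k^\star)$ and $g(u) = f(u,k^\star)$,
\[
	\frac{g(u+tv) - g(u)}{t} \;\leq\; \frac{f(u+tv,k^\star) - f(u,k^\star)}{t}\,,
\]
and the right-hand side tends to $\langle \nabla_u f(u,k^\star), v\rangle$ as $t \searrow 0$ by differentiability of $f(\cdot,k^\star)$ at $u$. Taking the infimum over $k^\star \in R(u)$ yields $\limsup_{t \searrow 0}\,\frac{g(u+tv)-g(u)}{t} \leq \min_{k \in R(u)} \langle \nabla_u f(u,k), v\rangle$.

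For the lower bound, choose $t_j \searrow 0$ realizing the $\liminf$ and, for each $j$, a minimizer $k_j \in K$ with $g(u+t_j v) = f(u+t_j v, k_j)$ (available by compactness of $K$ and continuity of $f$). Passing to a subsequence, $k_j \to \bar k \in K$, and then by continuity of $f$ and of $g$,
\[
	f(u,\bar k) = \lim_{j} f(u + t_j v, k_j) = \lim_{j} g(u + t_j v) = g(u)\,,
\]
so $\bar k \in R(u)$. Applying the mean value theorem to $s \mapsto f(u + sv, k_j)$ on $[0,t_j]$ produces $\theta_j \in (0,t_j)$ with
\[
	g(u+t_j v) - g(u) \;\geq\; f(u+t_j v, k_j) - f(u,k_j) \;=\; t_j\,\langle \nabla_u f(u+\theta_j v, k_j), v\rangle\,,
\]
where we used $g(u) \leq f(u,k_j)$. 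Dividing by $t_j$ and letting $j \to \infty$, joint continuity of $\nabla_u f$ together with $\theta_j \to 0$ and $k_j \to \bar k$ gives $\liminf_{t \searrow 0}\,\frac{g(u+tv)-g(u)}{t} \geq \langle \nabla_u f(u,\bar k), v\rangle \geq \min_{k \in R(u)} \langle \nabla_u f(u,k), v\rangle$. Combined with the upper bound, the one-sided directional derivative exists and equals the claimed minimum.

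The only real obstacle is the lower bound, and within it the single crucial point is that every accumulation point of the perturbed minimizers $k_j$ lies in $R(u)$ rather than drifting to an inactive index — this is exactly where compactness of $K$ and continuity of both $f$ and $g$ are indispensable. The remaining ingredients (the $C^1$-in-$u$ property of $f(\cdot,k)$, continuity of $g$, compactness of $R(u)$) are standard but worth stating explicitly, since the hypothesis is only joint continuity of $\nabla_u f$, not $C^1$-smoothness of $f$ on the product $U \times K$.
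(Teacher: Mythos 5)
Your argument is correct and complete: it is the standard Danskin-type sandwich argument, with the upper bound on the $\limsup$ coming from comparison at a fixed active point and the lower bound on the $\liminf$ coming from following the perturbed minimizers, whose accumulation points you correctly show to lie in $R(u)$. Note, however, that the paper does not prove this theorem at all --- it imports it verbatim from Dem'yanov and Malozemov (Theorem~2.2.1 of the cited monograph), remarking only in a footnote that the max/min formulations are interchangeable via $f \mapsto -f$. So there is no in-paper proof to compare against; your write-up supplies a self-contained proof of the cited result, and all the delicate points (the $C^1$ property of $f(\cdot,k)$ from mere joint continuity of $\nabla_u f$, continuity of $g$, compactness and nonemptiness of $R(u)$, and the mean value theorem step with $\theta_j \to 0$, $k_j \to \bar k$ feeding into joint continuity of the gradient) are handled correctly. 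The only cosmetic remark is that in the upper bound you could phrase the conclusion as taking the minimum (rather than infimum) over $k^\star \in R(u)$, since you have already established that this minimum is attained.
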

The following lemmas establish the properties of $\widehat{Q}_α$ which are necessary in order to apply Theorem~\ref{α1:demyanov-theorem}.

    \begin{lemma}\label{lemma:matrix-continuity}
        Let $\Omega$ be a subset of $\reals^n$ and $h\colon \Omega \times [0, \infty) \to [0, \infty)$ be a jointly continuous function. Then, the corresponding matrix function $h \colon \Omega \times \pos \to \pos$ is jointly continuous on $\Omega \times \pos$. 
    \end{lemma}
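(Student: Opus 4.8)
The plan is to prove the statement by establishing \emph{sequential} continuity, which is enough since $\Omega \times \pos$ is a metric space (and all norms on the finite-dimensional space $\BO$ are equivalent, so the choice of norm is irrelevant). So I would fix a convergent sequence $(\omega_k, A_k) \to (\omega, A)$ in $\Omega \times \pos$ and show $h(\omega_k, A_k) \to h(\omega, A)$. Since $(A_k)_k$ converges it is bounded, so there is $M < \infty$ with $\norm{A_k} \le M$ for all $k$ and $\norm{A} \le M$; hence every eigenvalue of every $A_k$ and of $A$ lies in $[0, M]$. The one structural subtlety is that $\Omega$ is an arbitrary subset of $\reals^n$ and need not be locally compact; this is sidestepped by working on $C := \{\omega\} \cup \{\omega_k : k \in \mathbb{N}\}$, which \emph{is} compact and contained in $\Omega$, so that $h$ restricted to the compact set $C \times [0, M]$ is uniformly continuous.

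Then I would split
\[ h(\omega_k, A_k) - h(\omega, A) = \bigl(h(\omega_k, A_k) - h(\omega, A_k)\bigr) + \bigl(h(\omega, A_k) - h(\omega, A)\bigr) \]
and bound the two terms separately. For the first term, if $A_k = \sum_i \lambda_i^{(k)} P_i^{(k)}$ is a spectral decomposition, then $h(\omega_k, A_k) - h(\omega, A_k) = \sum_i \bigl(h(\omega_k, \lambda_i^{(k)}) - h(\omega, \lambda_i^{(k)})\bigr) P_i^{(k)}$ is the matrix function corresponding to the real-valued map $t \mapsto h(\omega_k, t) - h(\omega, t)$ evaluated at $A_k$; being self-adjoint, its operator norm is $\max_i \abs{h(\omega_k, \lambda_i^{(k)}) - h(\omega, \lambda_i^{(k)})} \le \sup_{t \in [0,M]} \abs{h(\omega_k, t) - h(\omega, t)}$, and this supremum tends to $0$ by uniform continuity of $h$ on $C \times [0, M]$ together with $\omega_k \to \omega$. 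For the second term I would invoke the standard continuity of the functional calculus for the \emph{fixed} continuous function $h(\omega, \cdot)$ on $[0, M]$: given $\varepsilon > 0$, Weierstrass approximation gives a polynomial $p$ with $\sup_{t \in [0,M]} \abs{h(\omega, t) - p(t)} < \varepsilon$, hence $\norm{h(\omega, A_k) - p(A_k)} < \varepsilon$ and $\norm{h(\omega, A) - p(A)} < \varepsilon$ by the same eigenvalue estimate, while $p(A_k) \to p(A)$ since addition and multiplication of matrices are continuous; thus $\limsup_k \norm{h(\omega, A_k) - h(\omega, A)} \le 2\varepsilon$, and letting $\varepsilon \downarrow 0$ kills the second term. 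Combining the two bounds yields $h(\omega_k, A_k) \to h(\omega, A)$, and $h(\omega, A) \in \pos$ because $h$ is $[0,\infty)$-valued, so the eigenvalues $h(\omega, \lambda_i^{})$ of $h(\omega, A)$ are nonnegative.

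I do not anticipate a real obstacle: the argument rests only on two elementary facts, namely that for a real continuous $g$ and self-adjoint $A$ one has $\norm{g(A)} = \max_\lambda \abs{g(\lambda)}$ over the eigenvalues of $A$, and that polynomial matrix functions are jointly continuous. The only point needing a little care is the possible lack of local compactness of $\Omega$, which is handled by adjoining the limit $\omega$ to the sequence $(\omega_k)_k$ to obtain a compact set on which $h$ is uniformly continuous; everything else is routine.
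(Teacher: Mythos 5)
Your proof is correct and follows essentially the same route as the paper: the same sequential-continuity setup and triangle-inequality split, with the $\omega$-variation term reduced to a maximum of $\abs{h(\omega_k,\lambda)-h(\omega,\lambda)}$ over the (uniformly bounded) spectra and controlled by uniform continuity of $h$ on a compact product set. The only differences are cosmetic: you prove the fixed-$\omega$ continuity of the functional calculus via Weierstrass approximation where the paper cites Horn and Johnson, and you treat the possible failure of local compactness of $\Omega$ a bit more carefully by taking the compact set $\{\omega\}\cup\{\omega_k : k\in\mathbb{N}\}$ explicitly.
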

    \begin{proof}
        For a fixed $α \in \Omega$, the continuity of the matrix function $h(α, ρ)$  in $ρ$ follows from the continuity of $h(α, x)$ in $x$, for $x \in [0, \infty)$, by \cite[Theorem 6.2.37]{horn_topics_1991}. To see joint continuity, let $ρ_n \to ρ$ be a converging sequence in $\pos$ and $α_n \to α$ be a converging sequence in $\Omega$, as $n \to \infty$. Then, taking operator norms,
		\begin{align}\label{α1:continuity_triangle}
			\mnorm{h(α_n, ρ_n) - h(α, ρ)} \leq \mnorm{h(α_n, ρ_n) - h(α, ρ_n)} + \mnorm{h(α, ρ_n) - h(α, ρ)} \,.
		\end{align}
	    The second term on the right hand side goes to zero as $n \to \infty$ by the continuity of $h(α, ρ)$ in $ρ$. For the first term, note that $h(α_n, ρ_n)$ and $h(α, ρ_n)$ commute and the operator norm can be evaluated as
		\begin{equation}\label{α1:continuity_eigenvalues}
			\max_{\lambda \in \text{spec}(ρ_n)} |h(α_n, λ) - h(α, λ)| \,.
		\end{equation}
		Since the limit $α_n \to α$ exists, $α_n$ is eventually bounded, so there exists a compact subset $K$ of $\Omega$, such that $α_n \in K$ for all large enough $n$. 
		Analogously, the limit $ρ_n \to ρ$ exists, and so the spectrum of $ρ_n$ is eventually bounded, and hence there exists a compact subset $K'$ of $[0, \infty)$ such that $\text{spec}(ρ_n) \subset K'$ for all large enough $n$. Since $K\times K'$ is compact, $h$ is in fact uniformly continuous on $K \times K'$. Hence, the expression in (\ref{α1:continuity_eigenvalues}) goes to zero as $n \to \infty$ uniformly in all elements of the maximum and also uniformly in $ρ_n$. This shows that 
		\begin{equation}
		    \lim_{n \to \infty} \mnorm{h(α_n, ρ_n) - h(α, ρ)} = 0
		\end{equation}
		which completes the proof. 
	\end{proof}

	\begin{lemma}\label{α1:posdef-continuity}
	For $σ \in \pos$ fixed, the functions $f_ρ(α) = f(α, ρ) = \qrd[\widehat{Q}][α]{ρ}{σ}$ and $f'_ρ(α) = f'(α, ρ) = \dv{α} \qrd[\widehat{Q}][α]{ρ}{σ}$ are jointly continuous on $(0, \infty) \times \Set{A \in \pos| A \ll σ}$.
	\end{lemma}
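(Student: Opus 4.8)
The plan is to express both $f$ and $f'$ as traces against $σ$ of matrix functions of the single operator $X := σ^{-\frac12}ρσ^{-\frac12}$, and then to invoke Lemma~\ref{lemma:matrix-continuity}. Since every $ρ$ in the domain satisfies $ρ \ll σ$, we may work on $\supp{σ}$, on which $σ$ is invertible; there $X$ is positive semi-definite, depends linearly and hence continuously on $ρ$, and by cyclicity of the trace
\begin{equation}
  f(α, ρ) = \qrd[\widehat{Q}][α]{ρ}{σ} = \tr(σ^{\frac12} X^α σ^{\frac12}) = \tr(σ\, X^α).
\end{equation}
The scalar map $(α, x)\mapsto x^α$ is jointly continuous on $(0,\infty)\times[0,\infty)$ and takes values in $[0,\infty)$; the only point needing care is $x\to 0^+$, where $x^α\to 0$ uniformly for $α$ in compact subsets of $(0,\infty)$. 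Hence Lemma~\ref{lemma:matrix-continuity}, applied on the Hilbert space $\supp{σ}$, shows that $(α, X)\mapsto X^α$ is jointly continuous on $(0,\infty)\times\Set{A\in\pos | A\ll σ}$, and composing with the continuous maps $ρ\mapsto X$ and $Y\mapsto\tr(σ Y)$ gives joint continuity of $f$.

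For $f'$ I would first fix the spectral decomposition $X = \sum_i λ_i P_i$, whose projections do not depend on $α$. Then $f(α,ρ) = \sum_i λ_i^α\,\tr(σ P_i)$ is a finite sum of functions differentiable in $α$ (the terms with $λ_i = 0$ being constant), so $f'$ exists, and writing $h(α, x) := x^α\log x$ with the convention $h(α, 0) := 0$,
\begin{equation}
  f'(α, ρ) = \dv{α}\qrd[\widehat{Q}][α]{ρ}{σ} = \sum_i \big(λ_i^α\log λ_i\big)\,\tr(σ P_i) = \tr(σ\, h(α, X)),
\end{equation}
where $h(α, X) := \sum_i h(α, λ_i)\,P_i$ denotes the matrix function associated with $h$. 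Thus the derivative exists and is given by this formula, and it remains to show that $(α, X)\mapsto h(α, X)$ is jointly continuous and then to compose as before. The scalar $h$ is jointly continuous on $(0,\infty)\times[0,\infty)$ — again the only delicate point is $x\to 0^+$, where $x^α\abs{\log x}\to 0$ uniformly for $α$ in compact subsets of $(0,\infty)$ — but it is not $[0,\infty)$-valued, so Lemma~\ref{lemma:matrix-continuity} does not apply verbatim. I would get around this by writing $h = h_+ - h_-$ with $h_\pm(α, x) := x^α\max\{\pm\log x,\, 0\}$, both of which are jointly continuous and $[0,\infty)$-valued; applying Lemma~\ref{lemma:matrix-continuity} to $h_+$ and $h_-$ separately and subtracting yields joint continuity of $(α, X)\mapsto h(α, X)$, and hence of $f'$.

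The compositions and the scalar continuity claims above are routine. The two points I expect to need real care are the behaviour of $X$ at a vanishing eigenvalue — it is precisely the facts that $x^α\to 0$ and $x^α\log x\to 0$ as $x\to 0^+$, uniformly in $α$ on compacta, that keep $f$ and $f'$ continuous even where $ρ$ is rank-deficient — and the justification that $f'(α,ρ) = \tr(σ\, h(α, X))$, which is transparent here only because the spectral projections of $X$ are independent of $α$, so that the $α$-derivative acts termwise on the finite spectral sum and commutes with the fixed linear functional $Y\mapsto\tr(σ Y)$.
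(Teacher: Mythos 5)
Your proof is correct and follows essentially the same route as the paper: express $\widehat{Q}_α$ and its $α$-derivative as traces of matrix functions of $X = σ^{-\frac12}ρσ^{-\frac12}$ built from the scalar maps $(α,x)\mapsto x^α$ and $(α,x)\mapsto x^α\log x$, and invoke Lemma~\ref{lemma:matrix-continuity}. Your splitting $h = h_+ - h_-$ to fit the lemma's hypothesis that $h$ be $[0,\infty)$-valued, and your spectral-decomposition justification of the derivative formula, are careful touches that the paper leaves implicit but that do not change the argument.
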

	\begin{proof}
	    We have:
		\begin{align}
			\qrd[\widehat{Q}][α]{ρ}{σ} &= \tr(σ^{\frac{1}{2}}\qty(σ^{-\frac{1}{2}}ρ σ^{-\frac{1}{2}})^α σ^{\frac{1}{2}}) \\
			\dv{α}\qrd[\widehat{Q}][α]{ρ}{σ} &= \tr(σ^{\frac{1}{2}}\log(σ^{-\frac{1}{2}}ρ σ^{-\frac{1}{2}})\qty(σ^{-\frac{1}{2}}ρ σ^{-\frac{1}{2}})^α σ^{\frac{1}{2}})\, .
		\end{align}
		It is easy to check that the real functions $(α, x) \mapsto x^α$ and $(α, x) \mapsto x^α \log(x)$ are jointly continuous on $(0, \infty) \times [0, \infty)$. Also, for $A \in \Set{A \in \pos| A \ll σ}$ the expression $σ^{-\frac{1}{2}} A σ^{-\frac{1}{2}}$ is continuous in $A$ (remember that the inverses are all pseudo-inverses).  Hence, the continuity follows from Lemma \ref{lemma:matrix-continuity}, the continuity of the matrix product and the continuity of the trace.
	\end{proof}
	\begin{theorem}\label{α1:theorem-sharp-bs-limit}
		For $ρ\in \DM$ and $σ \in \pos$:
		\begin{equation}\label{α1:equation_theorem_limit}
			\lim\limits_{α \to 1} \qrd[D^\#][α]{ρ}{σ} = \widehat{D}(\rho||\sigma) 
		\end{equation}
	the Belavkin-Staszewski relative entropy. 
	\end{theorem}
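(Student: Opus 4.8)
The plan is to reduce the statement to a computation of a one-sided derivative at $\alpha = 1$ and then apply L'Hôpital's rule twice. First dispose of the degenerate case: if $\rho \centernot\ll \sigma$, then $Q^\#_\alpha(\rho\|\sigma) = +\infty$ for every $\alpha > 1$, hence $D^\#_\alpha(\rho\|\sigma) = +\infty$, and also $\widehat{D}(\rho\|\sigma) = +\infty$, so the claimed identity holds trivially. From now on assume $\rho \ll \sigma$.

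Fix some $a > 1$ and let $K$ be the compact set furnished by Lemma~\ref{α1:compact-restriction}, so that $K \subset \Set{A \in \pos | A \ll \sigma}$ and $Q^\#_\alpha(\rho\|\sigma) = g(\alpha) \coloneqq \min_{A \in K} \widehat{Q}_\alpha(A\|\sigma)$ for all $\alpha \in (1, a]$; the right-hand side is finite and real for every $\alpha \in (0, \infty)$ precisely because each $A \in K$ satisfies $A \ll \sigma$, and, being the minimum over a fixed compact set of a jointly continuous function (Lemma~\ref{α1:posdef-continuity}), $g$ is continuous on $(0,\infty)$. Since $A \ll \sigma$ implies $\widehat{Q}_1(A\|\sigma) = \tr(\sigma^{1/2}(\sigma^{-1/2} A \sigma^{-1/2})\sigma^{1/2}) = \tr A$, and $K = \Set{A \in \pos | \rho \le A \le \tilde{C}\Pi_\sigma}$ (so in particular $\rho \in K$), we obtain $g(1) = \min_{A \in K}\tr A = \tr\rho = 1$, the minimum being attained \emph{only} at $A = \rho$ because $A \ge \rho$ together with $\tr A = \tr\rho$ forces $A - \rho = 0$. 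Hence the active set $R(1) \coloneqq \{A \in K : \widehat{Q}_1(A\|\sigma) = g(1)\} = \{\rho\}$.

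Now apply Theorem~\ref{α1:demyanov-theorem} with $U = (0,\infty)$ (open, containing $1$), the compact set $K$ (regarded as a subset of the real vector space of Hermitian operators on $\HS$), and $f(\alpha, A) = \widehat{Q}_\alpha(A\|\sigma)$. Its hypotheses are met: since $K \subset \Set{A \in \pos | A \ll \sigma}$, both $f$ and $\partial_\alpha f$ are jointly continuous on $U \times K$ by Lemma~\ref{α1:posdef-continuity}. Taking the point $u = 1$, the direction $v = +1$ (i.e.\ increasing $\alpha$), and using $R(1) = \{\rho\}$, the theorem gives
\begin{equation}
	\lim_{t \searrow 0} \frac{g(1 + t) - g(1)}{t} = \left.\frac{\partial}{\partial \alpha} \widehat{Q}_\alpha(\rho\|\sigma)\right|_{\alpha = 1} \eqqcolon \widehat{Q}'_1 .
\end{equation}
Because $g(\alpha) = Q^\#_\alpha(\rho\|\sigma)$ for $\alpha \in (1, a]$ and $g(1) = 1$, this shows the right derivative of $\alpha \mapsto Q^\#_\alpha(\rho\|\sigma)$ at $\alpha = 1$ exists and equals $\widehat{Q}'_1$.

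Finally, feed this into the definitions. Since $g(1) = 1 > 0$ and $g$ is continuous, L'Hôpital's rule yields
\begin{equation}
	\lim_{\alpha \to 1^+} D^\#_\alpha(\rho\|\sigma) = \lim_{\alpha \to 1^+} \frac{\log g(\alpha) - \log g(1)}{\alpha - 1} = \frac{\widehat{Q}'_1}{g(1)} = \widehat{Q}'_1 .
\end{equation}
The identical computation applied to $\widehat{D}_\alpha(\rho\|\sigma) = \frac{1}{\alpha - 1}\log \widehat{Q}_\alpha(\rho\|\sigma)$ — which is smooth in $\alpha$ with $\widehat{Q}_1(\rho\|\sigma) = \tr\rho = 1$ — gives $\lim_{\alpha \to 1}\widehat{D}_\alpha(\rho\|\sigma) = \widehat{Q}'_1$, and combining this with the known limit $\lim_{\alpha \to 1}\widehat{D}_\alpha(\rho\|\sigma) = \widehat{D}(\rho\|\sigma)$ of the geometric \renyi divergence recalled above identifies $\widehat{Q}'_1 = \widehat{D}(\rho\|\sigma)$. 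As $D^\#_\alpha$ is only defined for $\alpha > 1$, the limit $\alpha \to 1$ in the theorem is a right limit, so the two displays together prove the claim. The one genuinely delicate point is the interchange of the minimization with differentiation at $\alpha = 1$: it works exactly because the minimizer there is unique ($R(1) = \{\rho\}$), which is what makes the Dem'yanov--Malozemov formula collapse to a single derivative, and the continuity required to invoke that theorem is precisely what Lemmas~\ref{lemma:matrix-continuity} and~\ref{α1:posdef-continuity} supply.
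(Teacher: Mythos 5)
Your proof is correct and follows essentially the same route as the paper: reduce to $\rho\ll\sigma$, use Lemma~\ref{α1:compact-restriction} to write $Q^\#_\alpha$ as a minimum over a fixed compact set, apply Theorem~\ref{α1:demyanov-theorem} with the observation that $R(1)=\{\rho\}$, and pass to the logarithm. The only quibble is your invocation of ``L'H\^opital's rule,'' which formally requires differentiability in a neighbourhood rather than only a one-sided derivative at the endpoint; what you actually need (and what the paper uses, after upgrading to a two-sided derivative) is the chain rule at the single point $\alpha=1$, i.e.\ writing the difference quotient of $\log g$ as the product of the difference quotients of $\log$ at $g(1)$ and of $g$ at $1$, which works verbatim for one-sided derivatives.
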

	\begin{proof}
		As the sharp R\'enyi divergence, $D_\alpha^{\#}$, is only defined for $α > 1$, all there is to prove is a limit from above. Let us fix $ρ \in \DM$ and $σ \in \pos$.
		We can restrict our proof to the case $ρ \ll σ$, since otherwise $\qrd[D^\#][α]{ρ}{σ} = \widehat{D}_\alpha(\rho||\sigma) = \infty$ for all $α > 1$ and the statement clearly holds. 
		By Lemma \ref{α1:compact-restriction} there exists a compact set $K \subseteq \Set{A\in \pos| A \ll σ}$\footnote{$K$ depends on $ρ$ and $σ$, but since they are fixed in the entire proof we do not make this dependence explicit.} such that for $α \in (1, 2]$,
		\begin{equation}
		    \qrd[Q^\#][α]{ρ}{σ} = \min_{A \in K} \qrd[\widehat{Q}][α]{A}{σ} \, .
		\end{equation}
		For $α \in (0, \infty)$,\footnote{To apply Theorem \ref{α1:demyanov-theorem} we require an open interval in $α$ which includes 1.} we define
		\begin{align}
			f(α, A) &\coloneqq \qrd[\widehat{Q}][α]{A}{σ} \\
			g(α) &\coloneqq \min_{A \in K} f(α, A) = \min_{A \in K} \qrd[\widehat{Q}][α]{A}{σ} \,.
		\end{align}
		By Lemma \ref{α1:posdef-continuity}, $f(α, A)$ is jointly continuous on $(0, \infty) \times K$ and has a jointly continuous derivative in $α$. Hence by Theorem \ref{α1:demyanov-theorem}, $g(α)$ has the the following one-sided derivatives:
		\begin{align}
			\lim_{α \searrow 1} \frac{g(α) - g(1)}{α - 1} &= \min_{A \in R(α)} \dv{α} f(α, A) \label{α1:theorem-limit:limit-above} \\
			\lim_{α \nearrow 1} \frac{g(α) - g(1)}{α - 1} &= -\min_{A \in R(α)} \left(-\dv{α} f(α, A)\right)\label{α1:theorem-limit:limit-below}\, ,
		\end{align}
		where $R(α) := \set{A \in K | f(α, A) = \min_{B \in K} f(α, B)}$. Recall that the sharp \renyi divergence is only defined for $α > 1$ and so all we really need here is the limit from above (\ref{α1:theorem-limit:limit-above}). However, establishing that this is also equal to the limit from below makes things slightly simpler, as we are then able to directly use the chain rule later in (\ref{eq:limit-theorem:chain-rule}).
		
		For $α = 1$ the set $R(α)$ only contains $ρ$, since
		\begin{align}
		    \min_{A \geq ρ} \qrd[\widehat{Q}][1]{A}{σ} = \min_{A \geq ρ} \Tr(A) = \Tr(ρ),
		\end{align}
		and $A \geq \rho $, together with $\Tr(A) = \Tr(ρ)$, imply that $A = ρ$. 
		Hence, the two one-sided derivatives are equal and
		\begin{align}
		\label{eq:deriequal}
            \dv{α} g(α)\bigg|_{α = 1} = \dv{α} f(α, ρ) \bigg|_{α = 1} = \dv{α} \qrd[\widehat{Q}][α]{ρ}{σ}\bigg|_{α = 1}\,.
		\end{align}
		We further have $g(α) = \qrd[Q^\#][α]{ρ}{σ}$ for $α \in (1, 2]$, and $g(1) = \tr(ρ) = 1$, so
		\begin{align}
			\lim\limits_{α \searrow 1} \qrd[D^\#][α]{ρ}{σ} &= \lim\limits_{α \searrow 1} \frac{\log \qrd[Q^\#][α]{ρ}{σ}}{α - 1} = \lim\limits_{α \searrow 1} \frac{\log g(α) - \log g(1)}{α - 1} \nonumber\\
			&= \dv{α} \log g(α) \bigg|_{α = 1}  = \frac{\dv{α} g(α) \big|_{α = 1}}{g(1)} = \dv{α} g(α) \bigg|_{α = 1} \,. \label{eq:limit-theorem:chain-rule}
		\end{align}
	As the same argument also gives \begin{align}\lim\limits_{α \to 1} \qrd[\widehat{D}][α]{ρ}{σ} = \dv{α} \widehat Q_\alpha(\rho\|\sigma) \bigg|_{α = 1}, \end{align} we get by using  \eqref{eq:deriequal}
	\begin{equation}\label{α1:limit-theorem:final_eq}
		\lim\limits_{α \to 1} \qrd[D^\#][α]{ρ}{σ} = \lim\limits_{α \to 1} \grd{ρ}{σ} = \widehat{D}(\rho||\sigma)\, .
	\end{equation}
	\end{proof}
\section{Kringel divergences and their properties}

 A key ingredient of our main result, Theorem~\ref{α1:theorem-sharp-bs-limit}, was Proposition~\ref{prop:key}, which allowed us to express
the sharp divergence as a minimization of the geometric R\'enyi divergence. Analogous minimizations of arbitrary generalized divergences lead to an interesting new family of generalized divergences which we call {\em{kringel divergences}}\footnote{Since the symbols $D$, $\widetilde{D}_\alpha$, $\widehat{D}_\alpha$, $D^\flat_α$ and $D^\#_\alpha$ all refer to existing divergences, and generalized divergences are usually denoted as $\bD$, we use the symbol $\bD^\circ$ to denote this new family; hence the name
{\em{kringel divergences}} (kringel = circle in German).
  }. We introduce them in this section and prove some of their properties, including the data-processing inequality.
\medskip

	For a function $\bD: \pos\times\pos \to \R\cup\{-\infty,\infty\}$  we define for $\rho,\sigma\in\pos$
	\begin{align}
    \bD^\circ(\rho\|\sigma) = \inf_{A\ge\rho}\bD(A\|\sigma).
    \end{align}
    Moreover, we say  $\bD: \pos\times\pos \to \R\cup\{-\infty,\infty\}$  is a {\em{generalized divergence}} if it satisfies the data-processing inequality, i.e.~for any $\rho,\sigma\in\pos$ and any quantum channel $\cN$, we have
	\begin{align}
	\label{eq:DPI}
	\bD(\cN(\rho)\|\cN(\sigma)) \le \bD(\rho\|\sigma).
	\end{align}
	\begin{remark}
	Note that the above definition of the generalized divergence is an extension of the standard definition (see e.g.~in~\cite{khatri_principles_2020}). In the latter, the generalized divergence is considered as a function $\bD: \mathcal{D}(\mathcal{H}) \times\pos \to \R\cup\{\infty\}$, where $\mathcal{D}(\mathcal{H})$ denotes the set of density matrices (quantum states) on $\mathcal{H}$.
	\end{remark}
	\smallskip
	
	\noindent
	The following lemma shows that if $\bD$ is a generalized divergence then so is $\bD^\circ$. In this case we call $\bD^\circ$ the \emph{kringel divergence} of $\bD$.
 
\begin{lemma}[Data-processing inequality] \label{lem:DPI}
	Let $\bD$ be a generalized divergence, $\rho,\sigma\in\pos$ and $\cN$ be a quantum channel. Then
	\begin{align}
	\bD^\circ(\cN(\rho)\|\cN(\sigma)) \le \bD^\circ(\rho\|\sigma).
	\end{align}
\end{lemma}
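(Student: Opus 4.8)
The plan is to exploit the monotonicity of the infimum under set inclusion, combined with the data-processing inequality that $\bD$ is assumed to satisfy. First I would observe that a quantum channel $\cN$ is in particular a positive map, so that $A \ge \rho$ implies $\cN(A) - \cN(\rho) = \cN(A - \rho) \ge 0$, i.e.\ $\cN(A) \ge \cN(\rho)$. Consequently, for every $A \in \pos$ feasible in the minimization defining $\bD^\circ(\rho\|\sigma)$ (i.e.\ $A \ge \rho$), the operator $\cN(A)$ is feasible in the minimization defining $\bD^\circ(\cN(\rho)\|\cN(\sigma))$. In other words, $\{\cN(A) : A \in \pos,\ A \ge \rho\} \subseteq \{B \in \pos : B \ge \cN(\rho)\}$.

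Next I would chain two estimates. By the data-processing inequality for $\bD$ applied to the pair $(A, \sigma)$, we have $\bD(\cN(A)\|\cN(\sigma)) \le \bD(A\|\sigma)$. Since $\cN(A) \ge \cN(\rho)$, the definition of the kringel divergence gives $\bD^\circ(\cN(\rho)\|\cN(\sigma)) \le \bD(\cN(A)\|\cN(\sigma))$. Combining the two yields $\bD^\circ(\cN(\rho)\|\cN(\sigma)) \le \bD(A\|\sigma)$ for every $A \ge \rho$, and taking the infimum over all such $A$ on the right-hand side produces the claim, $\bD^\circ(\cN(\rho)\|\cN(\sigma)) \le \inf_{A \ge \rho}\bD(A\|\sigma) = \bD^\circ(\rho\|\sigma)$.

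There is essentially no serious obstacle here: the argument is the standard "push-forward of feasible points" trick, and it goes through verbatim in the extended reals $\R \cup \{-\infty, \infty\}$, so no finiteness or attainment hypotheses on the infima are needed. The only point worth stating with a little care is the set inclusion above, which is the step where (complete) positivity of $\cN$ enters — in fact mere positivity suffices — and which is what makes the inequality point in the correct direction, since we end up taking an infimum over a possibly smaller set of feasible operators on the output side.
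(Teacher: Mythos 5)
Your proof is correct and is essentially identical to the paper's argument: both push each feasible $A \ge \rho$ forward to the feasible point $\cN(A) \ge \cN(\rho)$ (using positivity and linearity of $\cN$), apply the data-processing inequality for $\bD$, and take the infimum. No gaps.
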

\begin{proof}
	We have
	\begin{align}
	\nn\bD^\circ(\cN(\rho)\|\cN(\sigma)) &= \inf_{A\ge\cN(\rho)}\bD(A\|\cN(\sigma)) \le \inf_{\cN(A)\ge\cN(\rho)} \bD(\cN(A)\|\cN(\sigma))\\&\le\inf_{\cN(A)\ge\cN(\rho)} \bD(A\|\sigma)\le \inf_{A\ge\rho}\bD(A\|\sigma) = \bD^\circ(A\|\sigma).
	\end{align}
	Here, for the first inequality we have used the fact that the minimum increases when we only optimise over operators of the form $\cN(A)$ and for the second inequality we have used the data-processing inequality for $\bD$. For the third inequality we have used the fact that $A\ge\rho$ implies $\cN(A)\ge\cN(\rho)$, which follows by positivity and linearity of $\cN$.
\end{proof}
\begin{remark}
By Proposition~\ref{prop:key} we know that for $\alpha\in(1,\infty)$ we have
\begin{align}
\widehat D^\circ_\alpha = D^\#_\alpha,
\end{align}
where the left hand side denotes the kringel divergence corresponding to the geometric \renyi divergence \eqref{def:geometric-div}. Note that even though the geometric \renyi divergence satisfies the data-processing inequality only for $\alpha\in(0,1)\cup(1,2]$, by \cite[Proposition 3.2]{fawzi_defining_2020-1} its corresponding kringel divergence actually satisfies the data-processing inequality for all $\alpha\in(0,1)\cup(1,\infty).$ 
\end{remark}
\begin{lemma}
	\label{lem:subadd}
	If $\bD$ is subadditive then so is $\bD^\circ$, i.e. for all $\rho_1,\rho_2,\sigma_1,\sigma_2\in\pos$ we have
	\begin{align}
	\bD^\circ(\rho_1\otimes\rho_2\|\sigma_1\otimes\sigma_2) \le \bD^\circ(\rho_1\|\sigma_1) + \bD^\circ(\rho_2\|\sigma_2).
	\end{align}
\end{lemma}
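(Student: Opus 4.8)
The plan is to reduce the subadditivity of $\bD^\circ$ to that of $\bD$ by exhibiting, for any feasible pair in the two marginal minimizations, a feasible point in the joint minimization. Concretely, fix $\rho_1,\rho_2,\sigma_1,\sigma_2\in\pos$ and let $A_1\ge\rho_1$ and $A_2\ge\rho_2$ be arbitrary. The first step is to check that $A_1\otimes A_2\ge\rho_1\otimes\rho_2$, so that $A_1\otimes A_2$ is feasible for the infimum defining $\bD^\circ(\rho_1\otimes\rho_2\|\sigma_1\otimes\sigma_2)$.

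The key inequality $A_1\otimes A_2\ge\rho_1\otimes\rho_2$ follows from the algebraic identity
\begin{align}
A_1\otimes A_2 - \rho_1\otimes\rho_2 = (A_1-\rho_1)\otimes A_2 + \rho_1\otimes(A_2-\rho_2),
\end{align}
together with the fact that the tensor product of two positive semi-definite operators is positive semi-definite: here $A_1-\rho_1\ge 0$, $A_2\ge\rho_2\ge 0$, $\rho_1\ge 0$ and $A_2-\rho_2\ge 0$, so both summands are positive semi-definite. I do not expect any obstacle here; this is the only genuinely ``operator-theoretic'' ingredient and it is elementary.

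Given this, the remaining steps are purely order-theoretic. Since $A_1\otimes A_2$ is feasible, $\bD^\circ(\rho_1\otimes\rho_2\|\sigma_1\otimes\sigma_2)\le\bD(A_1\otimes A_2\|\sigma_1\otimes\sigma_2)$, and then subadditivity of $\bD$ gives $\bD(A_1\otimes A_2\|\sigma_1\otimes\sigma_2)\le\bD(A_1\|\sigma_1)+\bD(A_2\|\sigma_2)$. Since $A_1\ge\rho_1$ and $A_2\ge\rho_2$ were arbitrary, taking the infimum over each of them separately on the right-hand side yields $\bD^\circ(\rho_1\otimes\rho_2\|\sigma_1\otimes\sigma_2)\le\bD^\circ(\rho_1\|\sigma_1)+\bD^\circ(\rho_2\|\sigma_2)$, which is the claim. (One should note the infima decouple cleanly because $A_1$ and $A_2$ range over independent sets; no diagonal or joint constraint is imposed.) The only mild care needed is the handling of $\pm\infty$ values, which is harmless since the inequalities above are all of the form ``$\le$'' and pass to infima in $\R\cup\{-\infty,\infty\}$ without incident.
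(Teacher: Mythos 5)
Your proof is correct and follows essentially the same route as the paper: the same decomposition $A_1\otimes A_2 - \rho_1\otimes\rho_2 = (A_1-\rho_1)\otimes A_2 + \rho_1\otimes(A_2-\rho_2)$ to show feasibility of $A_1\otimes A_2$, then subadditivity of $\bD$ and decoupling of the infima. Your phrasing (fixing arbitrary feasible $A_1,A_2$ and passing to infima at the end) is if anything slightly cleaner than the paper's chain of infima, but the content is identical.
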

\begin{proof}
	We have
	\begin{align}
	\nn \bD^\circ(\rho_1\otimes\rho_2\| \sigma_1\otimes\sigma_2) &= \inf_{A\ge\rho_1\otimes\rho_2}\bD(A\| \sigma_1\otimes\sigma_2) \le \inf_{A_1\otimes A_2\ge\rho_1\otimes\rho_2}\bD(A_1\otimes A_2\| \sigma_1\otimes\sigma_2) \\ &\le\nn  \inf_{A_1\otimes A_2\ge\rho_1\otimes\rho_2} \Big(\bD(A_1\|\sigma_1) + \bD(A_2\|\sigma_2)\Big)  \\&= \inf_{A_1\ge \rho_1} \bD(A_1\|\sigma_1)  +\inf_{A_2\ge \rho_2} \bD(A_2\|\sigma_2)\nn\\& = \bD^\circ(\rho_1\|\sigma_1) + \bD^\circ(\rho_2\|\sigma_2).
	\end{align}
    The third line above follows from the fact that $A_1\ge \rho_1$ and $A_2\ge\rho_2$ implies that $A_1\otimes A_2\ge\rho_1\otimes\rho_2$ because $A_1\otimes A_2 - \rho_1\otimes\rho_2 = \left(A_1-\rho_1\right)\otimes A_2 + \rho_1\otimes\left(A_2 - \rho_2\right)$.
\end{proof}

\subsection{Kringel divergences for $\alpha$-\renyi divergences}
We say $\bD_\alpha: \pos\times\pos \to \R\cup\{-\infty,\infty\}$ is a \emph{quantum generalization of the $\alpha$-\renyi divergence} if it reduces to the corresponding classical \renyi divergence if both entries commute. That is, if $\rho,\sigma\in\pos$ are commuting operators then
\begin{align}
\bD_\alpha(\rho\|\sigma) = \frac{1}{\alpha-1}\log\Big(\sum_i p_i^{\alpha} q_i^{1-\alpha}\Big),
\end{align}
where the $\{p_i\}_i$ and $\{q_i\}_i$ are eigenvalues of $\rho$ and $\sigma$ respectively with respect to a simultaneous eigenbasis.
The following lemma states that, for $\alpha>1$, if $\bD_\alpha$ is a quantum generalization of the \renyi relative entropy, then $\bD^\circ_\alpha$ too reduces to the classical \renyi divergence in the commuting case, and hence is itself a quantum generalization of the \renyi relative entropy.
\begin{lemma}
\label{lem:Reny}
Let $\alpha>1$ and $\bD_\alpha$ be a quantum generalization of the $\alpha$-\renyi divergence satisfying the data-processing inequality \eqref{eq:DPI}. Then also $\bD^\circ_{\alpha}$ is a quantum generalization of the $\alpha$-\renyi divergence.
\end{lemma}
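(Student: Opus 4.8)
The plan is to prove two matching inequalities which together give, for commuting $\rho,\sigma\in\pos$,
\begin{align}
\bD^\circ_\alpha(\rho\|\sigma) \;=\; \frac{1}{\alpha-1}\log\Big(\sum_i p_i^{\alpha} q_i^{1-\alpha}\Big) \;=:\; D_\alpha(\rho\|\sigma),
\end{align}
where $\{p_i\}_i,\{q_i\}_i$ are the eigenvalues of $\rho,\sigma$ in a fixed common eigenbasis $\{\ket{i}\}_i$ and $D_\alpha$ denotes the classical $\alpha$-\renyi divergence (which we also apply to nonnegative vectors). The upper bound is immediate: $A=\rho$ is feasible in the infimum defining $\bD^\circ_\alpha$, and since $\bD_\alpha$ is a quantum generalization of the $\alpha$-\renyi divergence, $\bD^\circ_\alpha(\rho\|\sigma)\le\bD_\alpha(\rho\|\sigma)=D_\alpha(\rho\|\sigma)$.

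For the lower bound I would introduce the pinching (completely dephasing) channel $\cP(X)=\sum_i \ketbra{i}{i}\,X\,\ketbra{i}{i}$ associated with the common eigenbasis of $\rho$ and $\sigma$. This is a quantum channel with $\cP(\rho)=\rho$ and $\cP(\sigma)=\sigma$. Given any feasible $A\ge\rho$, positivity and linearity of $\cP$ give $\cP(A)\ge\cP(\rho)=\rho$, and $\cP(A)$ is diagonal in this basis, hence commutes with $\sigma$. Applying the data-processing inequality \eqref{eq:DPI} for $\bD_\alpha$ (with channel $\cP$) and then its defining reduction in the commuting case yields $\bD_\alpha(A\|\sigma)\ge\bD_\alpha(\cP(A)\|\sigma)=D_\alpha(\cP(A)\|\sigma)$.

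It then remains to check a purely classical monotonicity statement. Writing $a_i=\bra{i}\cP(A)\ket{i}\ge\bra{i}\rho\ket{i}=p_i\ge 0$ (the inequality because $\cP(A)-\rho\ge 0$), one has $\sum_i a_i^{\alpha}q_i^{1-\alpha}\ge\sum_i p_i^{\alpha}q_i^{1-\alpha}$ since $x\mapsto x^\alpha$ is nondecreasing on $[0,\infty)$; and because $\alpha>1$ the logarithm is increasing and the prefactor $\tfrac1{\alpha-1}$ is positive, so $D_\alpha(\cP(A)\|\sigma)\ge D_\alpha(\rho\|\sigma)$. Here one uses the usual conventions $0^{1-\alpha}=+\infty$ and $0\cdot\infty=0$, which also dispatch the case $\rho\centernot\ll\sigma$: then $D_\alpha(\rho\|\sigma)=+\infty$, and since $\cP(A)\ge\rho$ still fails the support condition, $D_\alpha(\cP(A)\|\sigma)=+\infty$ as well, forcing $\bD^\circ_\alpha(\rho\|\sigma)=+\infty$, consistently with the claim. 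Combining, $\bD_\alpha(A\|\sigma)\ge D_\alpha(\rho\|\sigma)$ for every $A\ge\rho$, hence $\bD^\circ_\alpha(\rho\|\sigma)\ge D_\alpha(\rho\|\sigma)$, which together with the upper bound finishes the proof.

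There is no genuine obstacle here; the one idea that makes everything work is the choice of the pinching channel, which simultaneously fixes $\rho$ and $\sigma$ while mapping every competitor $A$ into the commuting regime, so that the data-processing inequality collapses the question to the elementary monotonicity of the classical \renyi sum — and that last step is exactly where the hypothesis $\alpha>1$ is used.
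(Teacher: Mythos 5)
Your proposal is correct and follows essentially the same route as the paper's proof: the trivial upper bound from taking $A=\rho$, and the lower bound via the pinching channel onto the common eigenbasis combined with the data-processing inequality and the monotonicity of $x\mapsto x^\alpha$ for $\alpha>1$. The extra care you take with the $\rho\centernot\ll\sigma$ case and the sign of the prefactor $\tfrac{1}{\alpha-1}$ is welcome but does not change the argument.
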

\begin{proof}
Let $\rho, \sigma \in \pos $ such that $[\rho, \sigma]=0$. Then there exists a simultaneous eigenbasis $\{\ket{i}\}_i$ of $\rho$ and $\sigma$ such that 
\begin{align}
\rho = \sum_i p_i\kb{i},\quad\quad \sigma = \sum_{i}q_i\kb{i}.
\end{align} Clearly,
\begin{align}
\bD_\alpha^\circ(\rho\|\sigma) \le \bD_\alpha(\rho\|\sigma) = \frac{1}{\alpha-1}\log\Big(\sum_i p_i^{\alpha} q_i^{1-\alpha}\Big).
\end{align}
For the reversed inequality let $\cP$ be the pinching map defined as $\cP(\cdot) := \sum_{i}\kb{i}\cdot\kb{i}$. Hence, denoting for $A\ge \rho$ the diagonal entries by $a_i=\bra{i}A\ket{i}\ge p_i$, we see
\begin{align}
\nn \bD_\alpha^\circ(\rho\|\sigma) &= \inf_{A\ge\rho}\bD_\alpha(A\|\sigma) \ge\inf_{A\ge\rho}\bD_\alpha(\cP(A)\|\cP(\sigma)) 
\\&=\inf_{A\ge\rho} \frac{1}{\alpha-1}\log\Big(\sum_{i} a_i^{\alpha}q_i^{1-\alpha} \Big)=  \frac{1}{\alpha-1}\log\Big(\sum_{i} p_i^{\alpha}q_i^{1-\alpha}\Big),
\end{align}
where we have used data-processing inequality for $\bD_\alpha$ in the first inequality.
\end{proof}

\begin{remark}
Note that for $\alpha\in(0,1)$ and $\bD_\alpha$ being a quantum generalization of the $\alpha$-\renyi divergence, one easily sees that
\begin{align}
\bD_{\alpha}^\circ(\rho\|\sigma) = -\infty,
\end{align} 
for all $\rho,\sigma\in\pos$ with $\sigma\neq0$,
which is why we excluded this range of $\alpha$ in Lemma~\ref{lem:Reny} above and Proposition~\ref{prop:KringelMingel} below.
\end{remark}
Important quantum generalizations of the $\alpha$-R\'enyi divergence include the geometric- and sharp R\'enyi divergences, as well as the {\em{Petz R\'enyi divergence}} ($ D_\alpha$) \cite{petz_quasi-entropies_1986} and the {\em{sandwiched \renyi divergence}} ($\widetilde D_\alpha$) \cite{muller-lennert_quantum_2013, wilde_strong_2014}. The latter two are defined, respectively, as follows: for $\rho, \sigma \in \pos$
\begin{align}
D_\alpha (\rho||\sigma) &:= \frac{1}{\alpha -1} \log \tr \left(\rho^\alpha \sigma^{1-\alpha}\right), \nonumber\\
 \widetilde D_\alpha (\rho||\sigma) &:= \frac{1}{\alpha -1} \log \tr \left(\sigma^{\frac{1-\alpha}{2 \alpha}}\rho \sigma^{\frac{1-\alpha}{2 \alpha}}\right)^\alpha
\end{align}
if $\alpha\in(0,1)$, or $\alpha\in(1,\infty)$ and $\rho \ll\sigma$. If $\alpha\in(1,\infty)$ and $\rho\centernot\ll\sigma$, then $D_\alpha(\rho\|\sigma)=\widetilde D_\alpha(\rho\|\sigma)=\infty$. Both the Petz- and the sandwiched \renyi divergences are additive. Moreover, $D_\alpha$ for $\alpha\in(0,1)\cup(1,2]$, and $\widetilde D_\alpha$ for $\alpha\in[1/2,1)\cup (1,\infty)$, satisfy the data-processing inequality \eqref{eq:DPI}, as was shown
in \cite{petz_quasi-entropies_1986,tomamichel_fullyquantum_2009} and \cite{frank_monotonicity_2013}, respectively. Therefore, for both of these divergences the following proposition applies in the corresponding ranges of $\alpha$.
\begin{proposition}
    \label{prop:KringelMingel}
	Let $\alpha>1$ and $\bD_\alpha$ be a quantum generalization of the $\alpha$-\renyi divergence which is subadditive and satisfies the data-processing inequality~\eqref{eq:DPI}. Then
	\begin{align}
	\label{eq:bounds}
	\widetilde D_\alpha \le \bD^\circ_\alpha \le D^\#_\alpha.
	\end{align}
	In particular, this gives for any $\rho,\sigma\in\pos$
	\begin{align}
	\label{eq:reg}
	\lim_{n\to\infty} \frac{1}{n}\bD^\circ_\alpha(\rho^{\otimes n}\|\sigma^{\otimes n}) = 	\widetilde D_\alpha(\rho\|\sigma).
	\end{align}
	Moreover, in the case of the sandwiched divergence 
	\begin{align}
	\label{eq:sandwichKringel}
	\widetilde D^\circ_\alpha  = 	\widetilde D_\alpha. 
	\end{align}
\end{proposition}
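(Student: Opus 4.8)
The plan is to prove the two inequalities in \eqref{eq:bounds} separately and then to read off \eqref{eq:reg} and \eqref{eq:sandwichKringel}. The two external facts I would build on are the extremality of the geometric and the sandwiched R\'enyi divergences: among all quantum generalizations of the $\alpha$-R\'enyi divergence satisfying the data-processing inequality, $\widehat D_\alpha$ is the \emph{largest} \cite{matsumoto_new_2013}, and among those that are in addition subadditive, $\widetilde D_\alpha$ is the \emph{smallest}. For the upper bound I would use Proposition~\ref{prop:key} to write $D^\#_\alpha=\widehat D^\circ_\alpha$ and compare the two infima termwise,
\[
\bD^\circ_\alpha(\rho\|\sigma)=\inf_{A\ge\rho}\bD_\alpha(A\|\sigma)\ \le\ \inf_{A\ge\rho}\widehat D_\alpha(A\|\sigma)=\widehat D^\circ_\alpha(\rho\|\sigma)=D^\#_\alpha(\rho\|\sigma),
\]
the inequality being valid because $\bD_\alpha(A\|\sigma)\le\widehat D_\alpha(A\|\sigma)$ for every $A\in\pos$ by maximality of the geometric R\'enyi divergence.

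For the lower bound I would first note that $\widetilde D_\alpha(\,\cdot\,\|\sigma)$ is monotone nondecreasing in its first argument when $\alpha>1$: on $\Set{A\in\pos| A\ll\sigma}$ the map $A\mapsto\tr\left(\sigma^{\frac{1-\alpha}{2\alpha}}A\,\sigma^{\frac{1-\alpha}{2\alpha}}\right)^{\alpha}$ is monotone, because conjugation by the fixed positive operator $\sigma^{\frac{1-\alpha}{2\alpha}}$ preserves the operator order and $X\le Y$ implies $\tr(X^\alpha)\le\tr(Y^\alpha)$ for $\alpha>0$ (Weyl's monotonicity theorem together with monotonicity of $t\mapsto t^\alpha$), and $\tfrac1{\alpha-1}\log$ is increasing. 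Taking $A=\rho$ in $\widetilde D^\circ_\alpha(\rho\|\sigma)=\inf_{A\ge\rho}\widetilde D_\alpha(A\|\sigma)$ together with this monotonicity already gives \eqref{eq:sandwichKringel}. Combining the same monotonicity with the pointwise bound $\widetilde D_\alpha\le\bD_\alpha$ then yields the lower bound in \eqref{eq:bounds},
\[
\bD^\circ_\alpha(\rho\|\sigma)=\inf_{A\ge\rho}\bD_\alpha(A\|\sigma)\ \ge\ \inf_{A\ge\rho}\widetilde D_\alpha(A\|\sigma)\ \ge\ \widetilde D_\alpha(\rho\|\sigma).
\]
The cases $\rho\centernot\ll\sigma$ are trivial since then all quantities in sight equal $+\infty$.

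It remains to justify the pointwise minimality $\widetilde D_\alpha\le\bD_\alpha$. Data processing forces $\bD_\alpha$ to dominate the measured R\'enyi divergence $D^{\mathbb M}_\alpha$ — the supremum of the classical R\'enyi divergence over all measurement channels — since on the commuting outputs of a measurement $\bD_\alpha$ reduces to the classical R\'enyi divergence; subadditivity of $\bD_\alpha$ then gives $\bD_\alpha(\rho\|\sigma)\ge\tfrac1n\bD_\alpha(\rho^{\otimes n}\|\sigma^{\otimes n})\ge\tfrac1n D^{\mathbb M}_\alpha(\rho^{\otimes n}\|\sigma^{\otimes n})$ for all $n$, and for $\alpha>1$ the right-hand side tends to $\widetilde D_\alpha(\rho\|\sigma)$ (the regularized measured R\'enyi divergence equals the sandwiched one). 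Letting $n\to\infty$ gives $\bD_\alpha\ge\widetilde D_\alpha$.

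Finally, \eqref{eq:reg} follows from \eqref{eq:bounds} applied to $\rho^{\otimes n},\sigma^{\otimes n}$, divided by $n$, using additivity of $\widetilde D_\alpha$:
\[
\widetilde D_\alpha(\rho\|\sigma)=\tfrac1n\widetilde D_\alpha(\rho^{\otimes n}\|\sigma^{\otimes n})\ \le\ \tfrac1n\bD^\circ_\alpha(\rho^{\otimes n}\|\sigma^{\otimes n})\ \le\ \tfrac1n D^\#_\alpha(\rho^{\otimes n}\|\sigma^{\otimes n});
\]
since $\tfrac1n D^\#_\alpha(\rho^{\otimes n}\|\sigma^{\otimes n})\to\widetilde D_\alpha(\rho\|\sigma)$ \cite{fawzi_defining_2020,fawzi_defining_2020-1}, the squeeze theorem gives \eqref{eq:reg}, and by Lemma~\ref{lem:subadd} the limit is actually a decreasing one. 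The step I expect to be the main obstacle is not any single computation but the bookkeeping around the two extremality statements: one must confirm that the maximality of $\widehat D_\alpha$ and the minimality of $\widetilde D_\alpha$ hold for the whole range $\alpha\in(1,\infty)$ — not only $\alpha\in(1,2]$, where the geometric divergence itself is data-processing — and for general, possibly unnormalized, $\rho,\sigma\in\pos$ rather than only density matrices, along with the routine handling of support conditions and $\pm\infty$ values.
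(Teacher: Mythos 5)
Your proof is correct, and it reaches the lower bound in \eqref{eq:bounds} by a genuinely different route than the paper. Both proofs handle the upper bound identically (maximality of $\widehat D_\alpha$ among DPI quantum generalizations, termwise comparison of the infima, and Proposition~\ref{prop:key}). For the lower bound, the paper first shows that $\bD^\circ_\alpha$ is \emph{itself} a subadditive quantum generalization of the $\alpha$-R\'enyi divergence satisfying data processing (this is the job of Lemmas~\ref{lem:DPI}, \ref{lem:subadd} and \ref{lem:Reny}) and then invokes the minimality of $\widetilde D_\alpha$ within that class, applied to $\bD^\circ_\alpha$; \eqref{eq:sandwichKringel} then falls out of \eqref{eq:bounds}. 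You instead apply the minimality of $\widetilde D_\alpha$ to $\bD_\alpha$ directly (where the hypotheses are given, so Lemmas~\ref{lem:DPI}--\ref{lem:Reny} are not needed) and compensate with an extra observation: $\widetilde D_\alpha(\cdot\|\sigma)$ is operator-monotone in its first argument for $\alpha>1$, so that $\inf_{A\ge\rho}\widetilde D_\alpha(A\|\sigma)=\widetilde D_\alpha(\rho\|\sigma)$, which simultaneously yields \eqref{eq:sandwichKringel} and the lower bound. Your monotonicity argument is sound (conjugation preserves the operator order, and $X\le Y$ implies $\tr X^\alpha\le\tr Y^\alpha$ by Weyl), and the bookkeeping you flag at the end — that both extremality statements hold for all $\alpha\in(1,\infty)$ and for unnormalized $\rho,\sigma\in\pos$ — is exactly the point the paper addresses in a footnote. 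The trade-off: your route is more self-contained for this proposition, while the paper's route establishes the stronger structural fact that $\bD^\circ_\alpha$ is again a quantum generalization of the $\alpha$-R\'enyi divergence, which has independent interest. One tiny quibble: subadditivity makes $\frac1n\bD^\circ_\alpha(\rho^{\otimes n}\|\sigma^{\otimes n})$ converge to its infimum by Fekete's lemma, but it need not be monotonically decreasing in $n$; this remark is inessential to your argument in any case.
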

\begin{proof}
As $\bD_\alpha$ satisfies the data-processing inequality, we know that $\bD_\alpha \le \widehat D_\alpha$; this follows from the argument~\cite[Section 4.2.3]{tomamichel_quantum_2016} (also see \cite{matsumoto_new_2013} where the argument originally appeared)\footnote{Note that in
in~\cite{tomamichel_quantum_2016, matsumoto_new_2013} the argument is presented for the case in which $\rho$ and $\sigma$ are states. However, it can be easily seen that, by a slight modification, the argument also works for the case $\rho,\sigma\in\pos$.}. This gives
\begin{align}
\bD^\circ_\alpha(\rho\|\sigma) = \min_{A\ge\rho}\bD_\alpha(A\|\sigma) \le \min_{A\ge\rho}\widehat D_\alpha(A\|\rho)  = \widehat D^\circ_\alpha(\rho\|\sigma) =  D^\#_\alpha(\rho\|\sigma).
\end{align}
Moreover, as $\alpha>1$ and additionally $\bD_\alpha$ is subadditive, Lemmas~\ref{lem:DPI}, \ref{lem:subadd} and \ref{lem:Reny} give that also $\bD^\circ_\alpha$ is a subadditive quantum generalization of the $\alpha$-\renyi divergence satisfying the data-processing inequality. Therefore, by the argument in~\cite[Section 4.2.2]{tomamichel_quantum_2016} (note that actually only subadditivity instead of additivity is used there) we have
\begin{align}
\widetilde D_\alpha \le \bD^\circ_\alpha,
\end{align}
which gives \eqref{eq:bounds}. Since, trivially, $\widetilde D^\circ_\alpha  \le	\widetilde D_\alpha$, \eqref{eq:sandwichKringel} follows immediately.

Lastly, using $\alpha>1$ again and~\cite[Proposition 3.4]{fawzi_defining_2020} gives for any $\rho,\sigma\in\pos$
\begin{align}
\lim_{n\to\infty} \frac{1}{n}D^\#_\alpha(\rho^{\otimes n}\|\sigma^{\otimes n}) = 	\widetilde D_\alpha(\rho\|\sigma),
\end{align}
which together with the additivity of the sandwiched R\'enyi divergence and \eqref{eq:bounds} gives \eqref{eq:reg}.

\end{proof}

\bigskip
\noindent
\textbf{Acknowledgements.}
The authors would like to thank the Institute for Pure and Applied Mathematics (IPAM) at UCLA for hosting a stimulating workshop on {\em{Entropy Inequalities, Quantum Information and Quantum Physics}} (February 8-11, 2021). The main question addressed in this paper was also posed as an open question at the workshop. The authors would also like to thank Eric Carlen and Mark~M.~Wilde for helpful comments on earlier versions of this paper. Bjarne Bergh is supported by the UK Engineering and Physical Sciences Research Council (EPSRC), grant number EP/V52024X/1. Robert Salzmann is supported by the Cambridge Commonwealth, European and International Trust. 

\medskip
\noindent
\textbf{Data availability.}
Data sharing is not applicable to this article as no new data were created or analyzed in this study.
\printbibliography
\end{document}